\newcommand{\ignore}[1]{}
\numberwithin{equation}{section}
\newcommand{\Romannumeral}[1]{\uppercase\expandafter{\romannumeral#1}}
\newcommand{\ai}{\text{I}}
\newcommand{\ait}{\text{II}}
\newcommand{\aitr}{\text{III}}
\newcommand{\nul}{\text{N}}
\newcommand{\Z }{\mathbb {Z}}    
\newcommand{\Rea}{\mathbb {R}}      
\newtheorem{prop}{\bf Proposition}[section]
\newtheorem{thm}[prop]{\bf Theorem}
\begin{document}

\title[Multi-indexed Jacobi polynomials and Maya diagrams]
{Multi-indexed Jacobi polynomials and Maya diagrams}
\author{Kouichi Takemura}
\address{Department of Mathematics, Faculty of Science and Technology, Chuo University, 1-13-27 Kasuga, Bunkyo-ku Tokyo 112-8551, Japan.}
\email{takemura@math.chuo-u.ac.jp}
\subjclass[2010]{33C45,81Q80}
\begin{abstract}
Multi-indexed Jacobi polynomials are defined by the Wronskian of four types of eigenfunctions of a deformed  P\"oschl-Teller Hamiltonian.
We give a correspondence between multi-indexed Jacobi polynomials and pairs of Maya diagrams, and we show that any multi-indexed Jacobi polynomial is essentially equal to some multi-indexed Jacobi polynomial of two types of eigenfunction.
As an application, we show a Wronskian-type formula of some special eigenstates of the deformed  P\"oschl-Teller Hamiltonian.
\end{abstract}

\maketitle

\section{Introduction}
\label{sec:dpt}

Recently systems of orthogonal polynomials which are out of range of Bochner's theorem are studied actively (see \cite{GKM,MQ,HST,Odake} and references therein).
A typical example of them is the multi-indexed Jacobi polynomials, which is a  generalizations of exceptional Jacobi polynomials.
Here we recall the systems of quantum mechanics which are related to the multi-indexed Jacobi polynomials.
The Hamiltonian with the P\"oschl-Teller (PT) potential is given by
\begin{align}
\mathcal{H}=-\frac{d^2}{dx^2}+U(x;g,h), \quad 
U(x;g,h)=  {\displaystyle \frac{g(g-1)}{\sin^2x}+\frac{h(h-1)}{\cos^2x}-(g+h)^2}.
\label{pot}
\end{align}
The eigenstate with the energy $\mathcal{E}_n(g,h) = 4 n(n+g+h)$ $(  n=0,1,2, \dots )$ is given by
\begin{align}
\phi_n(x;g, h)&=(\sin x)^g (\cos x)^h P^{(g - 1/2, h - 1/2)}_n(\eta(x)),\quad 
\eta(x) = \cos(2x) ,
\end{align}
where $P_n^{(\alpha,\beta )}(\eta )$ is the Jacobi polynomial in the variable $\eta$ defined by
\begin{equation}
P _n^{(\alpha , \beta )}(\eta )= \frac{(\alpha +1)_n}{n!} \sum _{k=0}^n \frac{(-n)_k(n+\alpha +\beta +1)_k}{k! (\alpha +1) _k} \left( \frac{1-\eta }{2} \right)^k .
\end{equation}
Then the eigenstate is square-integrable $\int _0 ^{\pi /2} \phi_n(x;g, h)^2 dx < +\infty $ in the case $g>-1/2$, $h>-1/2$.
To define the multi-indexed Jacobi polynomials (\cite{os25,GKM}), we introduce three types of seed polynomial solutions indexed by $\text{v}\in\mathbb{Z}_{\geq 0}$:
\begin{align}
    \tilde{\phi}_\text{v}^\ai(x;g,h)&= (\sin x)^g (\cos x)^{1 - h} P^{(g - 1/2, 1/2-h)}_\text{v}(\eta(x)),
 \label{seed1}\\
  \tilde{\phi}_\text{v}^\ait(x;g,h)&= (\sin x)^{1-g} (\cos x)^h P^{(1/2-g, h-1/2)}_\text{v}(\eta(x)), \nonumber \\
 \tilde{\phi}_\text{v}^\aitr (x;g,h)&=   (\sin x)^{1-g} (\cos x)^{1 - h}  P^{(1/2-g, 1/2-h)}_\text{v}(\eta(x)). \nonumber
\end{align}
They are solutions of the Schr\"odinger equation \eqref{pot} with the eigenvalues $\tilde{\mathcal{E}}_\text{v}^\ai(g,h)=-4 (g + \text{v} + 1/2) (h   -\text{v}-1/2) $, $\tilde{\mathcal{E}}_\text{v}^\ait(g,h)= \tilde{\mathcal{E}}_{-(\text{v}+1)}^\ai (g,h) $, $\tilde{\mathcal{E}}_\text{v}^\aitr(g,h) =\mathcal{E}_{-(\text{v}+1)}(g,h) $ respectively, which are not square-integrable in the case $g\geq 3/2$, $h \geq 3/2$.
In this paper we assume that $g \pm h \not \in \Z $ and $g,h \not \in \Z +1/2 $ under which the distinct eigenstates and seed solutions are linearly independent.
Let $\varphi _j $ be a seed solution or an eigenstate for $j=1,\dots ,{\mathcal N}$ and assume that $\varphi _1, \dots , \varphi _{{\mathcal N}}$ are distinct.
Let $W[ \varphi _1, \dots , \varphi _{{\mathcal N'}} ](x)$ be the Wronskian with respect to the derivative of the variable $x$.
Then it follows from the typical argument (\cite{crum,adler,Krein}) that 
\begin{equation}
\phi^{(\mathcal{N})}_n(x)=
\frac{\text{W}[\varphi _1, \dots , \varphi _{{\mathcal N}},\phi_n](x)}{\text{W}[\varphi _1, \dots , \varphi _{{\mathcal N}}](x)}
\label{eq:phiMn}
\end{equation}
 is an eigenfunction of the deformed Hamiltonian 
\begin{align}
&{\mathcal H}^{(\mathcal{N})}=-\frac{d^2}{dx^2}+ U(x;g,h)-2\frac{d^2\log\text{W}[\varphi _1, \dots , \varphi _{{\mathcal N}}](x)}{dx^2}
\end{align}
with the same eigenvalue $\mathcal{E}_n=4n(n+g+h)$, provided that the deformed potential is non-singular on the open interval $(0,\pi /2)$, $\varphi _1, \dots , \varphi _{{\mathcal N}},\phi_n$ are distinct, and $g$, $h$ are enough large (see also \cite{os25}).
The multi-indexed Jacobi polynomial is defined by the polynomial part of the denominator of $\phi^{(\mathcal{N})}_n $.
In this paper we extend the notion of the multi-indexed Jacobi polynomial such that the polynomial part of the Wronskian $\text{W}[\varphi _1, \dots , \varphi _{{\mathcal N}},\phi_n](x) $ or $\text{W}[\varphi _1, \dots , \varphi _{{\mathcal N}}](x) $ in the variable $\eta $.
For example the Wronskian $\text{W}[\tilde{\phi } ^{\ai }_{1} \tilde{\phi } ^{\ait }_{2} \tilde{\phi } ^{\aitr }_{1} ](x)$ is equal to $\frac{(2g-1)(2h+1)}{16} (\sin x) ^{1-g} (\cos x)^{1-h } P(\eta (x) )$, where $P(\eta )$ is a multi-indexed Jacobi polynomial of degree $5$ such that the coefficient of $\eta ^5 $ is $ (g - h+2) (g - h -1) ( g - h-3 ) ( g - h-4 ) ( g + h -3) $.
We remark that the deformed potentials may coincide for a different choice of seed solutions  $\varphi _1, \dots , \varphi _{{\mathcal N}}$ and  $\varphi '_1, \dots , \varphi '_{{\mathcal N}'}$.

In this paper, we connect the tuple of seed solutions $\varphi _1, \dots , \varphi _{{\mathcal N}}$ to a pair of Maya diagram with a division, and we show that the Maya diagrams describe relations among the Wronskians $\text{W}[\varphi _1, \dots , \varphi _{{\mathcal N}}](x)$.
As a corollary, the polynomial part of the Wronskian $\text{W}[\varphi _1, \dots , \varphi _{{\mathcal N}}](x)$ is proportional to the polynomial part of  some Wronskian which constitutes the type $\ai $ seed solutions and the square-integrable eigenstates with shifted parameters.
Let us explain our results by an example.
The tuple $\tilde{\phi } ^{\ai }_{1} \tilde{\phi } ^{\ait }_{2} \tilde{\phi } ^{\aitr }_{1} $ corresponds to 
$$ 
(\aitr ) \:  {{\dots \bullet \bullet \circ \bullet | \circ  \circ \circ \circ \dots  }\atop{\dots  3\: 2\: 1\: 0\; \; 0\: 1\: 2\: 3 \dots  }} \: (\mbox{eigenstates}) , \; (\ait) \: {{\dots \bullet \bullet \circ \bullet \bullet | \circ \bullet \circ \circ \dots  }\atop{\dots 4 \: 3 \: 2\: 1\: 0 \; \; 0\: 1\: 2\: 3 \dots  }}\: (\ai) , 
$$ 
where the white (resp. black) beads in the left (resp. right) of the division of the first Maya diagram represented  the type $\aitr $ seed solutions (resp. the eigenstates) and the white (resp. black) beads in the left (resp. right) of the division of the second Maya diagram represent  the type $\ait $ (resp. type $\ai $) seed solutions.
We move the division of the second Maya diagram one step to the left. Then the resulting Maya diagrams with divisions are 
$$ 
{{\dots \bullet \bullet \circ \bullet | \circ  \circ \circ \circ \dots  }\atop{\dots  3\: 2\: 1\: 0\; \; 0\: 1\: 2\: 3 \dots  }}, \; {{\dots \bullet \bullet \circ \bullet | \bullet \circ \bullet \circ \circ \dots  }\atop{\dots 3 \: 2\: 1\: 0 \; \; 0\: 1\: 2\: 3 \: 4 \dots  }}, 
$$ 
the corresponding tuple of states is  $\tilde{\phi } ^{\ai }_{0} \tilde{\phi } ^{\ai }_{2}  \tilde{\phi } ^{\ait }_{1} \tilde{\phi } ^{\aitr }_{1} $ and the relation between the Wronskians is 
$ W [\tilde{\phi } ^{\ai }_{1} \tilde{\phi } ^{\ait }_{2} \tilde{\phi } ^{\aitr }_{1} ](x;g,h ) \propto W[ \tilde{\phi } ^{\ai }_{0} \tilde{\phi } ^{\ai }_{2}  \tilde{\phi } ^{\ait }_{1} \tilde{\phi } ^{\aitr }_{1} ] (x;g-1,h+1) (\sin x)^{1-g} (\cos x)^{h} $.
Note that similar formulas were obtained in \cite{os25,osMW,Odake}.
We move the division of the first and the second Maya diagrams repeatedly to the left.
Then we have the Maya diagrams with divisions 
$$ 
{{\dots \bullet \bullet | \circ \bullet \circ  \circ \circ \circ \dots  }\atop{\dots 1\: 0\; \; 0\: 1\: 2\: 3 \: 4 \: 5 \dots  }}, \; {{\dots \bullet \bullet | \circ \bullet \bullet \circ \bullet \circ \circ \dots  }\atop{\dots 1\: 0 \; \; 0\: 1\: 2\: 3\: 4 \: 5 \: 6 \dots  }}, 
$$ 
and the relation
$ W [\tilde{\phi } ^{\ai }_{1} \tilde{\phi } ^{\ait }_{2} \tilde{\phi } ^{\aitr }_{1} ](x;g,h ) \propto W[ \tilde{\phi } ^{\ai }_{1} \tilde{\phi } ^{\ai }_{2}  \tilde{\phi } ^{\ai }_{4} \phi _{1} ](x,g-5,h+1) (\sin x)^{15-5g} (\cos x)^{h} $.
Thus the given Wronskian is proportional to the Wronskian which consists of the first seed solutions and the eigenstates.
For the general statement, see the propositions and the theorem in section \ref{sec:MW}.
Note that Odake \cite{Odake} established that the polynomial part of the Wronskian $\text{W}[\varphi _1, \dots , \varphi _{{\mathcal N}}](x)$ where $\varphi _i $ is a type $\ai $ seed solution or a type $\ait $ seed solution is proportional to the polynomial part of some Wronskian which constitutes the type $\ai $ seed solutions with shifted parameters, where the systems include the discrete quantum mechanics.

We give an application to the eigenstates of the deformed PT system which are represented by deleting a seed solution in section \ref{sec:Apee}.
In an example of the system governed by the Hamiltonian  
\begin{align}
&{\mathcal H}^{(3)}=-\frac{d^2}{dx^2}+ U(x;g,h)-2\frac{d^2\log\text{W}[\tilde{\phi } ^{\ai }_{1} \tilde{\phi } ^{\ait }_{2} \tilde{\phi } ^{\aitr }_{1} ](x)}{dx^2} ,
\end{align}
the function 
\begin{equation}
\phi^{(3)}_{-2}(x)=
\frac{\text{W}[\tilde{\phi } ^{\ai }_{1} \tilde{\phi } ^{\ait }_{2} ](x)}{\text{W}[\tilde{\phi } ^{\ai }_{1} \tilde{\phi } ^{\ait }_{2} \tilde{\phi } ^{\aitr }_{1} ](x)}
\label{eq:phi3ndel}
\end{equation}
is an eigenfunction of the Hamiltonian with the eigenvalue $ \mathcal{E}_{-2}=-8 (g+h-2)$.
As a consequence, the positions of the white beads of the first Maya diagram describe the energies of the system.
Note that the eigenfunctions represented by deleting a seed solution were considered in \cite{osKA,GGM}
in the different situations.

This article is organized as follows.
In section \ref{sec:DT}, we obtain formulas which are used  later.
Section \ref{sec:MW} is the main part of this paper.
In section \ref{sec:Apee}, we give an application to the  extra eigenstates of the deformed PT system.
In section \ref{sec:rem}, we give concluding remarks.

\section{Darboux transformation and relations of Wronskian} \label{sec:DT}

We apply Darboux transformation with respect to the seed solution $\tilde{\phi}_\text{0}^\ai(x;g,h) =  (\sin x)^g (\cos x)^{1 - h}  $ in the PT system.
Then 
\begin{align}
\phi^{(1)}_n(x)= \frac{\text{W}[ \tilde{\phi}_\text{0}^\ai,\phi_n](x;g,h)}{\tilde{\phi}_\text{0}^\ai(x;g,h)}, \quad 
\tilde{\phi }^{(1), \ai }_\text{v}(x)= \frac{\text{W}[ \tilde{\phi}_\text{0}^\ai, \tilde{\phi}_\text{v} ^\ai ](x;g,h)}{\tilde{\phi}_\text{0}^\ai(x;g,h)}, 
\label{eq:phi2} \\
 \tilde{\phi }^{(1), \ait }_\text{v}(x)= \frac{\text{W}[ \tilde{\phi}_\text{0}^\ai, \tilde{\phi}_\text{v} ^\ait ](x;g,h)}{\tilde{\phi}_\text{0}^\ai(x;g,h)},  \quad 
\tilde{\phi }^{(1), \aitr }_\text{v}(x)= \frac{\text{W}[ \tilde{\phi}_\text{0}^\ai, \tilde{\phi}_\text{v} ^\aitr ](x;g,h)}{\tilde{\phi}_\text{0}^\ai(x;g,h)}, \nonumber
\end{align}
are eigenfunctions of the deformed Hamiltonian 
\begin{align}
&{\mathcal H}^{(1)}=-\frac{d^2}{dx^2}+ U(x;g,h)-2\frac{d^2\log ((\sin x)^g (\cos x)^{1 - h}  )}{dx^2}=-\frac{d^2}{dx^2}+ U(x;g+1,h-1)
\end{align}
with the eigenvalues $\mathcal{E}_n (g,h) =4n(n+g+h)$, $\tilde{\mathcal{E}}_\text{v}^\ai (g,h)=-4 (g + \text{v} + 1/2) (h   -\text{v}-1/2) $, $\tilde{\mathcal{E}}_\text{v}^\ait(g,h)= -4 (g - \text{v} - 1/2) (h  +\text{v} + 1/2)  $, $\tilde{\mathcal{E}}_\text{v}^\aitr(g,h) =-4( \text{v} + 1) (g+h  -\text{v} - 1)$ respectively.
On the other hand, the functions $\phi_n (x;g+1,h-1) $, $\tilde{\phi}_{\text{v}-1} ^\ai (x;g+1,h-1) $, $\tilde{\phi}_{\text{v} +1} ^\ait (x;g+1,h-1) $, $\tilde{\phi}_\text{v} ^\aitr (x;g+1,h-1) $ are also an eigenfunction of $-\frac{d^2}{dx^2}+ U(x;g+1,h-1) $ with the eigenvalues $4n(n+g+h)$, $-4 (g + \text{v} + 1/2) (h   -\text{v}-1/2) $, $ -4 (g - \text{v} - 1/2) (h  +\text{v} + 1/2) $, $-4( \text{v} + 1) (g+h  -\text{v} - 1)$ respectively.
By comparing eigenfunctions with the same eigenvalue, we have the following relations;
\begin{eqnarray}
& & \text{W}[\tilde{\phi } ^{\ai }_{0},\phi_n](x;g,h) \propto \phi_n (x;g +1,h-1) \tilde{\phi } ^{\ai }_{0}(x;g,h) , \label{eq:WI} \\
& & \text{W}[\tilde{\phi } ^{\ai }_{0}, \tilde{\phi } ^{\ai }_n](x;g,h) \propto \tilde{\phi } ^{\ai }_{n-1} (x;g +1,h-1) \tilde{\phi } ^{\ai }_{0}(x;g,h) , \nonumber \\
& & \text{W}[\tilde{\phi } ^{\ai }_{0}, \tilde{\phi } ^{\ait }_n](x;g,h) \propto \tilde{\phi } ^{\ait }_{n +1} (x;g+1,h-1) \tilde{\phi } ^{\ai }_{0} (x;g,h) , \nonumber \\
& &  \text{W}[\tilde{\phi } ^{\ai }_{0}, \tilde{\phi } ^{\aitr }_n ](x;g,h) \propto \tilde{\phi } ^{\aitr }_n (x;g+1,h-1) \tilde{\phi } ^{\ai }_{0}(x;g,h) . \nonumber 
\end{eqnarray}
\begin{proof}
The functions $\phi^{(1)}_n(x)$ and $\phi_n (x;g+1,h-1) $ satisfy the linear differential equation $\{ -\frac{d^2}{dx^2}+ U(x;g+1,h-1) - 4n(n+g+h) \} f(x)=0$.
Thus they belong to the two dimensional vector space of solutions of the differential equation.
The differential equation has a regular singularity along $x=0$ and the exponents are $g+1$ and $-g$, and any solutions of the differential equation is written as a liner combination of $x^{g+1} f_1(x^2)$ and $x^{-g} f_2 (x^2)$, where $f_1(t)$ and $f_2(t)$ are convergent power series of $t$ such that $f_1(1)=f_2(1)=1$.  
The function $\phi_n (x;g+1,h-1)  $ is expanded as $x^{g+1} (c_0 +c _2 x^2 +c_4 x^4+ \dots ) $.
On the other hand, the function $\phi^{(1)}_n(x) = \phi ' _n (x;g,h) - \phi  _n (x;g,h) \tilde{\phi}_\text{0}^\ai(x;g,h) '  / \tilde{\phi}_\text{0}^\ai(x;g,h)$ are  expanded as $x^{g+1} (c'_0 +c' _2 x^2 +c'_4 x^4+ \dots ) $.
Therefore the function $\phi^{(1)}_n(x)$ is proportional to the function $\phi_n (x;g+1,h-1)  $, i.e.  $ \text{W}[\tilde{\phi } ^{\ai }_{0},\phi_n](x;g,h) \propto \phi_n (x;g +1,h-1) \tilde{\phi } ^{\ai }_{0}(x;g,h) $.
Other relations are shown similarly.
\end{proof}
We denote the square-integrable eigenstate by the type "null" eigenfunction, i.e. $\tilde{\phi } ^{\nul }_{n} (x;g,h)=  \phi  _{n} (x;g,h)$.
Then the following proposition is proved by similar ways to obtain Eq.(\ref{eq:WI}).
\begin{prop}
Let $ J \in \{ \ai ,\ait , \aitr , \nul \}$.
We have
\begin{align}
& \text{W}[\tilde{\phi } ^{\ai }_{0}, \tilde{\phi } ^{J }_{n} ](x;g,h) \propto \tilde{\phi } ^{J }_{n -(\ai ,J) }  (x;g+1 ,h-1 ) \tilde{\phi } ^{\ai }_{0} (x;g,h) , \label{eq:W0Jn} \\
& \text{W}[\tilde{\phi } ^{\ait }_{0}, \tilde{\phi } ^{J }_{n} ](x;g,h) \propto \tilde{\phi } ^{J }_{n -(\ait ,J) }  (x;g-1 ,h+1 ) \tilde{\phi } ^{\ait }_{0} (x;g,h) , \nonumber \\
& \text{W}[\tilde{\phi } ^{\aitr }_{0}, \tilde{\phi } ^{J }_{n} ](x;g,h) \propto \tilde{\phi } ^{J }_{n -(\aitr ,J) }  (x;g-1 ,h-1 ) \tilde{\phi } ^{\aitr }_{0} (x;g,h) , \nonumber \\
& \text{W}[{\phi } _{0}, \tilde{\phi } ^{J }_{n} ](x;g,h) \propto \tilde{\phi } ^{J }_{n -(N ,J) }  (x;g+1 ,h+1) {\phi } _{0} (x;g,h) , \nonumber 
\end{align}
where 
\begin{equation}
(J,J')= \left\{ 
\begin{array}{ll}
1 & J=J',\\
-1 & \{ J, J' \} =\{ \ai ,\ait \}  \mbox{ or } \{\aitr , \nul \} ,\\
0 & \mbox{otherwise.}
\end{array}
\right.
\end{equation}
\end{prop}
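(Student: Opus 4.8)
The plan is to prove all sixteen relations (four values of $J$ in each of the four displayed lines) by the same strategy used to establish Eq.~\eqref{eq:WI}, organized into four groups according to the seed solution $\tilde{\phi}^{J_0}_0$ with $J_0\in\{\ai,\ait,\aitr,\nul\}$. Since $P_0^{(\alpha,\beta)}\equiv 1$, each seed is an explicit product $(\sin x)^a(\cos x)^b$, so a direct computation of $-2(d^2/dx^2)\log\tilde{\phi}^{J_0}_0$ shows that the Darboux-transformed Hamiltonian equals $-d^2/dx^2+U(x;g',h')+c_{J_0}$, where the parameter shift $(g',h')$ is $(g+1,h-1)$, $(g-1,h+1)$, $(g-1,h-1)$, $(g+1,h+1)$ for $J_0=\ai,\ait,\aitr,\nul$ respectively, and the additive constant is $c_{\ai}=c_{\ait}=0$, $c_{\aitr}=-4(g+h-1)$, $c_{\nul}=4(g+h+1)$. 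The nonzero constants appear because the term $-(g+h)^2$ in $U$ is not invariant under the shift in the $\aitr$ and $\nul$ cases.

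By the Crum--Darboux construction, $\text{W}[\tilde{\phi}^{J_0}_0,\tilde{\phi}^J_n](x;g,h)/\tilde{\phi}^{J_0}_0(x;g,h)$ is an eigenfunction of the transformed Hamiltonian with the eigenvalue $\tilde{\mathcal{E}}^J_n(g,h)$ carried by $\tilde{\phi}^J_n$ in the undeformed system, hence an eigenfunction of $-d^2/dx^2+U(x;g',h')$ with eigenvalue $\tilde{\mathcal{E}}^J_n(g,h)-c_{J_0}$. The candidate factor $\tilde{\phi}^J_{n-(J_0,J)}(x;g',h')$ is likewise an eigenfunction of $-d^2/dx^2+U(x;g',h')$, with eigenvalue $\tilde{\mathcal{E}}^J_{n-(J_0,J)}(g',h')$. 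The integer $(J_0,J)$ is then pinned by the identity $\tilde{\mathcal{E}}^J_n(g,h)-c_{J_0}=\tilde{\mathcal{E}}^J_{n-(J_0,J)}(g',h')$, which I would verify for all four values of $J$ in each group; this finite check reproduces the pairing table for $(J,J')$. For instance, with $J_0=\aitr$, $J=\aitr$ one finds $\tilde{\mathcal{E}}^\aitr_n(g,h)+4(g+h-1)=-4n(g+h-n-2)=\tilde{\mathcal{E}}^\aitr_{n-1}(g-1,h-1)$, forcing $(\aitr,\aitr)=1$, while the pairs $\{\ai,\ait\}$ and $\{\aitr,\nul\}$ yield the shift $-1$ and all remaining pairs the shift $0$.

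It remains to upgrade ``same operator, same eigenvalue'' to genuine proportionality, exactly as in the proof of Eq.~\eqref{eq:WI}. Both functions solve the same second-order ODE, whose singular points $x=0$ and $x=\pi/2$ carry characteristic exponents $\{g',1-g'\}$ and $\{h',1-h'\}$. The hypotheses $g,h\notin\Z+1/2$ make the exponent differences $2g'-1$ and $2h'-1$ non-integers, so the two Frobenius solutions at each endpoint are independent and the solution genuinely realizing the larger exponent at a given endpoint is unique up to a scalar. I would therefore compute the genuine leading power of $\text{W}[\tilde{\phi}^{J_0}_0,\tilde{\phi}^J_n]/\tilde{\phi}^{J_0}_0$ at a suitable endpoint directly from the product forms of the factors, paying attention to the cancellation of leading terms that occurs when the seed and $\tilde{\phi}^J_n$ share the same leading power, as already happens in Eq.~\eqref{eq:WI} where the $x^{g-1}$ terms cancel and the transform starts at $x^{g+1}$, and check that it agrees with the explicit leading power of $\tilde{\phi}^J_{n-(J_0,J)}(x;g',h')$. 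Since both then realize the same larger exponent at that endpoint, they are proportional, which is the asserted Wronskian identity.

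The main obstacle I anticipate is bookkeeping rather than conceptual. First, the additive constants $c_{J_0}$ in the $\aitr$ and $\nul$ cases must be tracked carefully, since omitting them would spoil the eigenvalue identity and hence the index shift $(J_0,J)$. Second, for seeds such as $\tilde{\phi}^\ait_0$ and $\tilde{\phi}^\aitr_0$ whose leading power at $x=0$ is the smaller exponent $1-g$, one must select the endpoint $x=\pi/2$ at which the larger exponent is genuinely realized so that the pinning argument applies; for the ranges of $(g,h)$ where no single endpoint realizes the larger exponent, I would establish the identity first in a parameter range where such an endpoint exists and then extend to all $(g,h)$ with $g\pm h\notin\Z$ and $g,h\notin\Z+1/2$ by analytic continuation, both sides depending analytically on $(g,h)$. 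Once the correct endpoint is fixed in each of the sixteen cases, the leading-coefficient comparison is routine.
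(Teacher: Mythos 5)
Your proposal is correct and is essentially the paper's own proof: the paper disposes of this proposition with the single remark that it ``is proved by similar ways to obtain Eq.~(\ref{eq:WI})'', i.e.\ exactly your strategy of Darboux-transforming by the explicit seed $(\sin x)^a(\cos x)^b$, identifying the deformed Hamiltonian as a shifted P\"oschl--Teller operator (with the additive constants you compute in the $\aitr$ and $\nul$ cases), matching eigenvalues to fix the index shift $(J_0,J)$, and then comparing Frobenius expansions at an endpoint to upgrade equality of eigenvalues to proportionality. Your write-up merely makes explicit the bookkeeping the paper leaves implicit, so no further comparison is needed.
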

By applying the formula of Wronskians
\begin{equation}
\text{W}[\varphi _1, \dots , \varphi _M ,f,g ](x) \text{W}[\varphi _1, \dots , \varphi _M ](x) =  \text{W}[ \text{W}[\varphi _1, \dots , \varphi _M ,f](x) , \text{W}[\varphi _1, \dots , \varphi _M ,g](x)] 
\end{equation}
(see \cite{crum}), we have the following relations;
\begin{prop} \label{prop:ai0M}
Let $ t_j \in \{ \ai ,\ait , \aitr , \nul \}$ and $n_j \in \{ 0,1,2, \dots \}$.
We have
\begin{align}
& \text{W}[\tilde{\phi } ^{\ai }_{0}, \tilde{\phi } ^{t_1 }_{n_1}  , \ldots ,\tilde{\phi } ^{t_M }_{n_M} ](x;g,h)  \propto \text{W}[  \tilde{\phi } ^{t_1 }_{n_1 -(\ai ,t_1) }  , \ldots ,\tilde{\phi } ^{t_M }_{n_M -(\ai ,t_M) } ] (x;g+1 ,h-1 ) \tilde{\phi } ^{\ai }_{0} (x;g,h) , \\
& \text{W}[\tilde{\phi } ^{\ait }_{0}, \tilde{\phi } ^{t_1 }_{n_1}  , \ldots ,\tilde{\phi } ^{t_M }_{n_M} ](x;g,h)  \propto \text{W}[  \tilde{\phi } ^{t_1 }_{n_1 -(\ait ,t_1) }  , \ldots ,\tilde{\phi } ^{t_M }_{n_M -(\ait ,t_M) } ] (x;g-1 ,h+1 ) \tilde{\phi } ^{\ait }_{0} (x;g,h) , \nonumber \\
& \text{W}[\tilde{\phi } ^{\aitr }_{0}, \tilde{\phi } ^{t_1 }_{n_1}  , \ldots ,\tilde{\phi } ^{t_M }_{n_M} ](x;g,h)  \propto \text{W}[  \tilde{\phi } ^{t_1 }_{n_1 -(\aitr ,t_1) }  , \ldots ,\tilde{\phi } ^{t_M }_{n_M -(\aitr ,t_M) } ] (x;g-1 ,h-1 ) \tilde{\phi } ^{\aitr }_{0} (x;g,h) , \nonumber \\
& \text{W}[{\phi }_{0}, \tilde{\phi } ^{t_1 }_{n_1}  , \ldots ,\tilde{\phi } ^{t_M }_{n_M} ](x;g,h)  \propto \text{W}[  \tilde{\phi } ^{t_1 }_{n_1 -(\nul  ,t_1) }  , \ldots ,\tilde{\phi } ^{t_M }_{n_M -(N ,t_M) } ] (x;g+1 ,h+1 ) {\phi } _{0} (x;g,h). \nonumber 
\end{align}
\end{prop}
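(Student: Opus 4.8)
The plan is to prove all four relations by one and the same argument, so I treat only the first; write $\psi(x)=\tilde{\phi}^{\ai}_{0}(x;g,h)$ and $\chi_j(x)=\tilde{\phi}^{t_j}_{n_j}(x;g,h)$ for $j=1,\dots,M$. The point is that $\text{W}[\psi,\chi_1,\dots,\chi_M]$ is precisely the Wronskian produced by a single Darboux step with seed $\psi$, so it ought to factor through the first-order operator
\begin{equation*}
L_\psi[f]=\frac{\text{W}[\psi,f]}{\psi}=f'-\frac{\psi'}{\psi}\,f ,
\end{equation*}
which, by the previous Proposition, sends each seed solution $\chi_j$ to a constant multiple of the function appearing on the right-hand side of the relations \eqref{eq:W0Jn}. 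Thus I would first reduce the $M$-function Wronskian to a Wronskian of the $L_\psi[\chi_j]$, and only then substitute the single-function relations.

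The key step is the Crum-type covariance identity
\begin{equation*}
\text{W}[\psi,\chi_1,\dots,\chi_M](x)=\psi(x)\,\text{W}\bigl[L_\psi[\chi_1],\dots,L_\psi[\chi_M]\bigr](x),
\end{equation*}
which I would prove by induction on $M$ from the bilinear Wronskian formula quoted just before the proposition. The cases $M=0,1$ are immediate from $\text{W}[\psi]=\psi$ and the definition of $L_\psi$. For the inductive step, apply the bilinear identity with base $\psi,\chi_1,\dots,\chi_{M-2}$ and the pair $f=\chi_{M-1}$, $g=\chi_M$; the three Wronskians occurring on its right-hand side each involve $\psi$ together with $M-1$ of the $\chi_j$, so the induction hypothesis rewrites them as $\psi$ times Wronskians of the $L_\psi[\chi_j]$. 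Using the scaling rule $\text{W}[\psi a_1,\dots,\psi a_k]=\psi^{k}\,\text{W}[a_1,\dots,a_k]$ and then invoking the bilinear identity once more, now applied to the functions $L_\psi[\chi_j]$, one collects the powers of $\psi$ and cancels the common factor $\psi\,\text{W}[L_\psi[\chi_1],\dots,L_\psi[\chi_{M-2}]]$ to obtain the displayed identity.

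With this identity available the conclusion is routine. By the previous Proposition each $L_\psi[\chi_j]=\text{W}[\psi,\chi_j]/\psi$ is proportional to $\tilde{\phi}^{t_j}_{n_j-(\ai,t_j)}(x;g+1,h-1)$, and since the Wronskian is multilinear the scalar constants pull out, yielding
\begin{equation*}
\text{W}[\tilde{\phi}^{\ai}_{0},\tilde{\phi}^{t_1}_{n_1},\dots,\tilde{\phi}^{t_M}_{n_M}](x;g,h)\propto \tilde{\phi}^{\ai}_{0}(x;g,h)\,\text{W}[\tilde{\phi}^{t_1}_{n_1-(\ai,t_1)},\dots,\tilde{\phi}^{t_M}_{n_M-(\ai,t_M)}](x;g+1,h-1),
\end{equation*}
which is the first relation; the remaining three are obtained verbatim by taking $\psi=\tilde{\phi}^{\ait}_{0}$, $\tilde{\phi}^{\aitr}_{0}$, $\phi_0$ and the corresponding parameter shifts $(g-1,h+1)$, $(g-1,h-1)$, $(g+1,h+1)$ from the other lines of the previous Proposition. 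I expect the only genuine obstacle to be the bookkeeping of the powers of $\psi$ in the covariance identity, i.e.\ verifying that the prefactor is exactly $\psi^1$ and not a higher power; the genericity hypotheses $g\pm h\notin\Z$, $g,h\notin\Z+1/2$ ensure that the functions involved are nonzero and distinct, so that the cancellations are legitimate, while any degenerate coincidence of indices makes both sides vanish (consistently with $P^{(\alpha,\beta)}_{-1}=0$) and is in any case excluded by the standing distinctness assumption.
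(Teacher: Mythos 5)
Your proposal is correct and takes essentially the same route as the paper: an induction on the number of functions driven by the bilinear (Crum) Wronskian identity quoted just before the proposition, combined with the one-function relations of the preceding proposition. The only difference is organizational — you first isolate the exact covariance identity $\text{W}[\psi,\chi_1,\dots,\chi_M]=\psi\,\text{W}\bigl[\text{W}[\psi,\chi_1]/\psi,\dots,\text{W}[\psi,\chi_M]/\psi\bigr]$ and substitute afterwards, whereas the paper's induction carries the proportionality and the parameter shift $(g,h)\to(g+1,h-1)$ along inside the inductive step, but the underlying computation is identical.
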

\begin{proof}
We show the first relation by the induction of $M$, because the others are shown similarly.
The case $M=1$ is true.
We assume that the case $M\leq k$ is true.
Then 
\begin{align*}
& \text{W}[\tilde{\phi } ^{\ai }_{0}, \tilde{\phi } ^{t_1 }_{n_1}  , \ldots  ,\tilde{\phi } ^{t_{k-1} }_{n_{k-1}}  ,\tilde{\phi } ^{t_k }_{n_k}  ,\tilde{\phi } ^{t_{k+1} }_{n_{k+1}} ](x;g,h)  \text{W}[\tilde{\phi } ^{\ai }_{0}, \tilde{\phi } ^{t_1 }_{n_1}  , \ldots  ,\tilde{\phi } ^{t_{k-1} }_{n_{k-1}}  ](x;g,h)  \\
& = \text{W}[ \text{W}[\tilde{\phi } ^{\ai }_{0}, \tilde{\phi } ^{t_1 }_{n_1}  , \ldots  ,\tilde{\phi } ^{t_{k-1} }_{n_{k-1}}  ,\tilde{\phi } ^{t_k }_{n_k}  ](x;g,h) , \text{W}[\tilde{\phi } ^{\ai }_{0}, \tilde{\phi } ^{t_1 }_{n_1}  , \ldots  ,\tilde{\phi } ^{t_{k-1} }_{n_{k-1}}   ,\tilde{\phi } ^{t_{k+1} }_{n_{k+1}} ](x;g,h)  ] \\
& \propto \text{W}[ \text{W}[\tilde{\phi } ^{t_1 }_{n_1 -(\ai ,t_1) }  , \ldots  ,\tilde{\phi } ^{t_{k-1} }_{n_{k-1}-(\ai ,t_{k-1}) }  ,\tilde{\phi } ^{t_k }_{n_k -(\ai ,t_k) } ](x;g+1,h-1) \tilde{\phi } ^{\ai }_{0} (x;g,h) , \\
& \qquad \text{W}[ \tilde{\phi } ^{t_1 }_{n_1-(\ai ,t_{1})}  , \ldots  ,\tilde{\phi } ^{t_{k-1} }_{n_{k-1}-(\ai ,t_{k-1})}  ,\tilde{\phi } ^{t_{k+1} }_{n_{k+1}-(\ai ,t_{k+1})} ](x;g+1,h-1) \tilde{\phi } ^{\ai }_{0} (x;g,h)  ] \\
& =  \text{W}[ \text{W}[\tilde{\phi } ^{t_1 }_{n_1 -(\ai ,t_1) }  , \ldots  ,\tilde{\phi } ^{t_{k-1} }_{n_{k-1}-(\ai ,t_{k-1}) }  ,\tilde{\phi } ^{t_k }_{n_k -(\ai ,t_k) } ](x;g+1,h-1)  , \\
&  \qquad \text{W}[ \tilde{\phi } ^{t_1 }_{n_1-(\ai ,t_{1})}  , \ldots  ,\tilde{\phi } ^{t_{k-1} }_{n_{k-1}-(\ai ,t_{k-1})}  ,\tilde{\phi } ^{t_{k+1} }_{n_{k+1}-(\ai ,t_{k+1})}   ](x;g+1,h-1)  ] \tilde{\phi } ^{\ai }_{0} (x;g,h) ^2 \\
& =  \text{W}[\tilde{\phi } ^{t_1 }_{n_1 -(\ai ,t_1) }  , \ldots  ,\tilde{\phi } ^{t_{k-1} }_{n_{k-1}-(\ai ,t_{k-1}) }  ,\tilde{\phi } ^{t_k }_{n_k -(\ai ,t_k) } ,\tilde{\phi } ^{t_{k+1} }_{n_{k+1}-(\ai ,t_{k+1})} ](x;g+1,h-1)  \\
&  \qquad \text{W}[ \tilde{\phi } ^{t_1 }_{n_1-(\ai ,t_{1})}  , \ldots  ,\tilde{\phi } ^{t_{k-1} }_{n_{k-1}-(\ai ,t_{k-1})}    ](x;g+1,h-1)  ] \tilde{\phi } ^{\ai }_{0} (x;g,h) ^2 .
\end{align*}
Combining with 
\begin{align*}
&   \text{W}[\tilde{\phi } ^{\ai }_{0}, \tilde{\phi } ^{t_1 }_{n_1}  , \ldots  ,\tilde{\phi } ^{t_{k-1} }_{n_{k-1}}  ](x;g,h)  \\
& \propto  \text{W}[ \tilde{\phi } ^{t_1 }_{n_1-(\ai ,t_{1})}  , \ldots  ,\tilde{\phi } ^{t_{k-1} }_{n_{k-1}-(\ai ,t_{k-1})}    ](x;g+1,h-1)   \tilde{\phi } ^{\ai }_{0} (x;g,h) ,
\end{align*}
we obtain the proposition for the case $M=k+1$.
\end{proof}

\section{Maya diagrams and Wronskian} \label{sec:MW}

A Maya diagram is a placement of white beads ($\circ $) or black ones ($\bullet  $) in a line such that the beads at a sufficiently negative (left) position are black and the ones at a sufficiently positive (right) position are white.

For the tuple of states 
\begin{equation}
\tilde{\phi } ^{\ai }_{d^{\ai }_1},\tilde{\phi } ^{\ai }_{d^{\ai }_2},\ldots,\tilde{\phi } ^{\ai }_{d^{\ai }_{M_{\ai }}}, \tilde{\phi } ^{\ait }_{d^{\ait }_1},\ldots,\tilde{\phi } ^{\ait }_{d^{\ait }_{M_{\ait }}},   \tilde{\phi } ^{\aitr }_{d^{\aitr }_1},\ldots,\tilde{\phi } ^{\aitr }_{d^{\aitr }_{M_{\aitr }}}, \phi _{d^{\nul }_1},\ldots, \phi _{d^{\nul }_{M_{\nul }}}, \label{eq:tuple}
\end{equation}
we associate a couple of Maya diagrams with a division as follows;\\
1. The $d^{\aitr }_k +1 $-st bead to the left of the division of the first Maya diagram is white ($\circ $) for $k=1,2,\dots , M_{\aitr }$, and the remainder of the beads to the left of the division is black $(\bullet )$.\\
2. The $d^{\nul }_k +1 $-st bead to the right of the division of the first Maya diagram is black ($\bullet $) for $k=1,2,\dots , M_{\nul }$, and the remainder of the beads to the right of the division is white $(\circ )$.\\
3. The $d^{\ait }_k +1 $-st bead to the right of the division of the second Maya diagram is black ($\bullet $) for $k=1,2,\dots , M_{\ait }$, and the remainder of the beads to the right of the division is white $(\circ )$.\\
4. The $d^{\ai }_k +1 $-st bead to the left of the division of the second Maya diagram is white ($\circ $) for $k=1,2,\dots , M_{\ai }$, and the remainder of the beads to the left of the division is black $(\bullet )$.\\
An example is given in the introduction.
We give another example here.
The tuple of states $\tilde{\phi } ^{\ai }_{2} \tilde{\phi } ^{\ai }_{3} \tilde{\phi } ^{\ait }_{0} \tilde{\phi } ^{\ait }_{2} \tilde{\phi } ^{\aitr }_{3} \phi _{0} \phi _{1} $ corresponds to the following Maya diagrams; 
$$ \dots \bullet  \bullet \bullet \circ \bullet \bullet \bullet | \bullet \bullet \circ \circ \circ \dots , \quad \dots \bullet \bullet \bullet \circ \bullet \circ | \circ \circ \bullet \bullet \circ \circ \circ \dots .$$
We go back to the general situation where the tuple of state is given by Eq.(\ref{eq:tuple}) and assume that $ 0 \leq d^{\text J}_1 < d^{\text J}_2 < \dots < d^{\text J}_{M_{\text J }}$ for $J \in \{ \ai ,\ait , \aitr , \nul \} $.
We move the division of the second Maya diagram one step to the left.
Then the corresponding tuple of states is 
\begin{align}
&  \tilde{\phi } ^{\ai }_{0}, \tilde{\phi } ^{\ai }_{d^{\ai }_1 +1},\ldots,\tilde{\phi } ^{\ai }_{d^{\ai }_{M_{\ai }}+1}, \tilde{\phi } ^{\ait }_{d^{\ait }_1 -1},\ldots,\tilde{\phi } ^{\ait }_{d^{\ait }_{M_{\ait }}-1}, \tilde{\phi } ^{\aitr }_{d^{\aitr }_1},\ldots,\tilde{\phi } ^{\aitr }_{d^{\aitr }_{M_{\aitr }}}, \phi _{d^{\nul }_1},\ldots, \phi _{d^{\nul }_{M_{\nul }}} , & d^{\ait }_1 \neq 0 ,\\
& \tilde{\phi } ^{\ai }_{d^{\ai }_1 +1 },\ldots,\tilde{\phi } ^{\ai }_{d^{\ai }_{M_{\ai }}+1}, \tilde{\phi } ^{\ait }_{d^{\ait }_2 -1},\ldots,\tilde{\phi } ^{\ait }_{d^{\ait }_{M_{\ait }}-1},  \tilde{\phi } ^{\aitr }_{d^{\aitr }_1},\ldots,\tilde{\phi } ^{\aitr }_{d^{\aitr }_{M_{\aitr }}}, \phi _{d^{\nul }_1},\ldots, \phi _{d^{\nul }_{M_{\nul }}} ,& d^{\ait }_1 = 0 . \nonumber
\end{align}
In the example of $\tilde{\phi } ^{\ai }_{2} \tilde{\phi } ^{\ai }_{3} \tilde{\phi } ^{\ait }_{0} \tilde{\phi } ^{\ait }_{2} \tilde{\phi } ^{\aitr }_{3} \phi _{0} \phi _{1} $, we have $ d^{\ait }_1 = 0 $, the changed Maya diagrams are 
$$ \dots \bullet  \bullet \bullet \circ \bullet \bullet \bullet | \bullet \bullet \circ \circ \circ \dots , \quad \dots \bullet \bullet \bullet \circ \bullet | \circ \circ \circ \bullet \bullet \circ \circ \circ \dots .$$
and the tuple of state corresponding to the changed Maya diagrams is $\tilde{\phi } ^{\ai }_{3} \tilde{\phi } ^{\ai }_{4} \tilde{\phi } ^{\ait }_{1} \tilde{\phi } ^{\aitr }_{3} \phi _{0} \phi _{1}  $.

It follows from Proposition \ref{prop:ai0M} that the Wronskian
\begin{equation}
\text{W}[\tilde{\phi } ^{\ai }_{d^{\ai }_1},\ldots,\tilde{\phi } ^{\ai }_{d^{\ai }_{M_{\ai }}}, \tilde{\phi } ^{\ait }_{d^{\ait }_1},\ldots,\tilde{\phi } ^{\ait }_{d^{\ait }_{M_{\ait }}},   \tilde{\phi } ^{\aitr }_{d^{\aitr }_1},\ldots,\tilde{\phi } ^{\aitr }_{d^{\aitr }_{M_{\aitr }}}, \phi _{d^{\nul }_1}, \ldots, \phi _{d^{\nul }_{M_{\nul }}}] (x;g,h) 
\label{eq:Wron}
\end{equation}
is proportional to 
\begin{align}
& \text{W}[ \tilde{\phi } ^{\ai }_{0}, \tilde{\phi } ^{\ai }_{d^{\ai }_1 +1},\ldots,\tilde{\phi } ^{\ai }_{d^{\ai }_{M_{\ai }}+1}, \tilde{\phi } ^{\ait }_{d^{\ait }_1 -1},\ldots,\tilde{\phi } ^{\ait }_{d^{\ait }_{M_{\ait }}-1}, \tilde{\phi } ^{\aitr }_{d^{\aitr }_1},\ldots,\tilde{\phi } ^{\aitr }_{d^{\aitr }_{M_{\aitr }}}, \phi _{d^{\nul }_1},\ldots \\
& \qquad \ldots, \phi _{d^{\nul }_{M_{\nul }}} ]  (x,g-1,h+1)  / \tilde{\phi } ^{\ai }_{0} (x;g-1,h+1) \nonumber  
\end{align}
for $d^{\ait }_1 \neq 0  $ and to 
\begin{align}
& \text{W}[\tilde{\phi } ^{\ai }_{d^{\ai }_1 +1 },\ldots,\tilde{\phi } ^{\ai }_{d^{\ai }_{M_{\ai }}+1}, \tilde{\phi } ^{\ait }_{d^{\ait }_2 -1},\ldots,\tilde{\phi } ^{\ait }_{d^{\ait }_{M_{\ait }}-1},  \tilde{\phi } ^{\aitr }_{d^{\aitr }_1},\ldots,\tilde{\phi } ^{\aitr }_{d^{\aitr }_{M_{\aitr }}}, \phi _{d^{\nul }_1},\ldots \\
& \qquad \ldots , \phi _{d^{\nul }_{M_{\nul }}} ] (x,g-1,h+1)  \tilde{\phi } ^{\ait }_{0} (x;g,h) \nonumber  
\end{align}
for $d^{\ait }_1 =0 $.
Note that in the case that the states in the Wronskian are type $ai$ and $\ait$, the relations were obtained by Odake and Sasaki \cite{os25,osMW}, although the proof seems to be different from ours. 
By using $1/ \tilde{\phi } ^{\ai }_{0} (x;g-1,h+1)  = \tilde{\phi } ^{\ait }_{0} (x;g,h)  = (\sin x)^{1-g} (\cos x)^{h}  $, we have the following proposition on the movement of the division of the second Maya diagram.
\begin{prop}
If the division of the second Maya diagram is moved one step to the left, then the original Wronskian is proportional to the product of the function $ (\sin x)^{1-g} (\cos x)^{h} $ and the Wronskian corresponding to the moved division where the parameters are shifted to $g-1$ and $h+1$.
\end{prop}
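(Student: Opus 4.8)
The plan is to read off the tuple attached to the shifted division combinatorially and then recognise the resulting Wronskian identity as a direct instance of Proposition \ref{prop:ai0M}; the translation between the two is supplied by the elementary identity
$$\frac{1}{\tilde{\phi}^{\ai}_0(x;g-1,h+1)}=\tilde{\phi}^{\ait}_0(x;g,h)=(\sin x)^{1-g}(\cos x)^{h},$$
which follows at once from $\tilde{\phi}^{\ai}_0(x;g,h)=(\sin x)^g(\cos x)^{1-h}$ and from the definition \eqref{seed1} of $\tilde{\phi}^{\ait}_0$ (where $P^{(\alpha,\beta)}_0=1$). Throughout I would use that a Wronskian is antisymmetric in its entries, so that entries may be reordered at the cost of a sign that is absorbed into the symbol $\propto$.

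First I would split into the two cases distinguished by whether $d^{\ait}_1=0$, in each of which the moved tuple has already been recorded. When $d^{\ait}_1=0$ the original tuple \eqref{eq:tuple} already contains $\tilde{\phi}^{\ait}_0=\tilde{\phi}^{\ait}_{d^{\ait}_1}$, so I bring it to the front and apply the relation of Proposition \ref{prop:ai0M} headed by $\tilde{\phi}^{\ait}_0$, keeping its left-hand parameters at $(g,h)$. The shifts $-(\ait,\cdot)$ raise every type $\ai$ index by one, lower every remaining type $\ait$ index by one, and fix the type $\aitr$ and type $\nul$ indices, which is exactly the moved tuple at parameters $(g-1,h+1)$; the prefactor produced is $\tilde{\phi}^{\ait}_0(x;g,h)=(\sin x)^{1-g}(\cos x)^{h}$, giving the proposition directly.

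When $d^{\ait}_1\neq 0$ the original tuple carries no index‑$0$ entry to peel off, so I would run the argument in the reverse direction. The moved tuple now begins with a fresh $\tilde{\phi}^{\ai}_0$, so I apply the relation of Proposition \ref{prop:ai0M} headed by $\tilde{\phi}^{\ai}_0$ to the moved tuple at parameters $(g-1,h+1)$. The shifts $-(\ai,\cdot)$ lower every type $\ai$ index by one and raise every type $\ait$ index by one, undoing the move and returning the Wronskian \eqref{eq:Wron} at parameters $(g,h)$, with prefactor $\tilde{\phi}^{\ai}_0(x;g-1,h+1)$. Dividing through and inserting the displayed identity replaces $1/\tilde{\phi}^{\ai}_0(x;g-1,h+1)$ by $(\sin x)^{1-g}(\cos x)^{h}$, again yielding the stated proportionality, so both cases produce the same prefactor as claimed.

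I expect the genuine content to lie not in this algebra but in verifying that the Maya‑diagram bookkeeping is faithful: that sliding the division of the second diagram one notch to the left sends the type $\ai$ indices $d^{\ai}_k\mapsto d^{\ai}_k+1$ and the remaining type $\ait$ indices $d^{\ait}_k\mapsto d^{\ait}_k-1$, leaves the type $\aitr$ and type $\nul$ indices untouched, and lets the single bead that slides across the division either introduce a new $\tilde{\phi}^{\ai}_0$ entry (when $d^{\ait}_1\neq 0$) or delete $\tilde{\phi}^{\ait}_0$ (when $d^{\ait}_1=0$) — precisely matching the increments $-(\ai,\cdot)$ and $-(\ait,\cdot)$ encoded by the pairing $(J,J')$. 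Once this correspondence is confirmed, the two cases above close the proof.
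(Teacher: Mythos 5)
Your proposal is correct and follows essentially the same route as the paper: both cases are handled by a single application of Proposition \ref{prop:ai0M} (the relation headed by $\tilde{\phi}^{\ait}_0$ applied to the original tuple when $d^{\ait}_1=0$, and the relation headed by $\tilde{\phi}^{\ai}_0$ applied in reverse to the moved tuple when $d^{\ait}_1\neq 0$), unified by the identity $1/\tilde{\phi}^{\ai}_0(x;g-1,h+1)=\tilde{\phi}^{\ait}_0(x;g,h)=(\sin x)^{1-g}(\cos x)^{h}$, exactly as in the discussion preceding the proposition in the paper. Your closing check that the bead-sliding bookkeeping matches the shifts $-(\ai,\cdot)$ and $-(\ait,\cdot)$ is the same correspondence the paper records when it lists the moved tuples, so there is no gap.
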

On the movement of the division of the first Maya diagram, we have
\begin{prop}
If the division of the first Maya diagram is moved one step to the left, then the original Wronskian is proportional to the product of the function $ (\sin x)^{1-g} (\cos x)^{1-h} $ and the Wronskian corresponding to the moved division where the parameters are shifted to $g-1$ and $h-1$, i.e. the Wronskian given as Eq.(\ref{eq:Wron}) is proportional to
\begin{align}
& \text{W}[  \tilde{\phi } ^{\ai }_{d^{\ai }_1 },\ldots,\tilde{\phi } ^{\ai }_{d^{\ai }_{M_{\ai }}}, \tilde{\phi } ^{\ait }_{d^{\ait }_1 },\ldots,\tilde{\phi } ^{\ait }_{d^{\ait }_{M_{\ait }}}, \tilde{\phi } ^{\aitr }_{d^{\aitr }_1 -1},\ldots,\tilde{\phi } ^{\aitr }_{d^{\aitr }_{M_{\aitr }}-1}, \phi _{0}, \phi _{d^{\nul }_1 +1}, \ldots \\
& \qquad \ldots, \phi _{d^{\nul }_{M_{\nul }}+1} ]  (x,g-1,h-1)  (\sin x)^{1-g} (\cos x)^{1-h} \nonumber
\end{align}
for $d^{\aitr }_1 \neq 0  $ and to 
\begin{align*}
& \text{W}[\tilde{\phi } ^{\ai }_{d^{\ai }_1 },\ldots,\tilde{\phi } ^{\ai }_{d^{\ai }_{M_{\ai }}}, \tilde{\phi } ^{\ait }_{d^{\ait }_2 },\ldots,\tilde{\phi } ^{\ait }_{d^{\ait }_{M_{\ait }}},  \tilde{\phi } ^{\aitr }_{d^{\aitr }_2 -1},\ldots,\tilde{\phi } ^{\aitr }_{d^{\aitr }_{M_{\aitr }}-1}, \phi _{d^{\nul }_1 +1},\ldots \\
& \qquad \ldots , \phi _{d^{\nul }_{M_{\nul }}+1} ] (x,g-1,h-1) (\sin x)^{1-g} (\cos x)^{1-h} \nonumber
\end{align*}
for $d^{\aitr }_1 =0 $.
\end{prop}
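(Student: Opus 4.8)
The plan is to mirror the proof of the preceding proposition on the movement of the second Maya diagram, but now invoking the third and fourth relations of Proposition \ref{prop:ai0M}, namely the ones attached to $\tilde{\phi}^{\aitr}_0$ and to $\phi_0 = \tilde{\phi}^{\nul}_0$. These are exactly the two relations whose parameter shift is $(g,h) \mapsto (g-1,h-1)$, which matches the shift claimed here; this is consistent with the fact that in the first Maya diagram the white beads to the left of the division encode the type $\aitr$ seed solutions and the black beads to its right encode the eigenstates (type $\nul$), whereas the type $\ai$ and type $\ait$ data live entirely on the second diagram.

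First I would record the combinatorial effect of sliding the first division one step to the left. Re-reading the tuple from the new labels, every type $\aitr$ index drops by one, every type $\nul$ index rises by one, and the type $\ai$ and type $\ait$ indices are untouched. The only genuine dichotomy is the colour of the bead that crosses the division, i.e.\ the bead immediately to the left of the old division: if it is black, i.e.\ $d^{\aitr}_1 \neq 0$, it becomes the first black bead to the right of the new division, so a factor $\phi_0$ is created; if it is white, i.e.\ $d^{\aitr}_1 = 0$, the seed solution $\tilde{\phi}^{\aitr}_0$ simply disappears. This produces exactly the two moved tuples in the statement.

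For the case $d^{\aitr}_1 = 0$, the original tuple already contains $\tilde{\phi}^{\aitr}_0$; after moving it to the front of the Wronskian (\ref{eq:Wron}) (which only changes the overall sign and is absorbed into $\propto$), I would apply the third relation of Proposition \ref{prop:ai0M} at the parameters $(g,h)$. The induced shifts $n_j \mapsto n_j - (\aitr, t_j)$ decrease the surviving type $\aitr$ indices by one, raise the type $\nul$ indices by one, and leave the others fixed, so the resulting Wronskian at $(g-1,h-1)$ is precisely the one attached to the moved diagram; the leftover factor is $\tilde{\phi}^{\aitr}_0(x;g,h) = (\sin x)^{1-g}(\cos x)^{1-h}$, which is the asserted prefactor.

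For the case $d^{\aitr}_1 \neq 0$ the original Wronskian carries no factor $\tilde{\phi}^{\aitr}_0$, but the moved tuple now contains $\phi_0$; so I would instead apply the fourth relation of Proposition \ref{prop:ai0M} to the moved Wronskian at the parameters $(g-1,h-1)$, with $\phi_0$ brought to the front. The shifts $n_j \mapsto n_j - (\nul, t_j)$ raise the type $\aitr$ indices and lower the type $\nul$ indices, exactly undoing the move, so the right-hand side is the original Wronskian (\ref{eq:Wron}) at $(g,h)$ times $\phi_0(x;g-1,h-1) = (\sin x)^{g-1}(\cos x)^{h-1}$; dividing through and using $1/\phi_0(x;g-1,h-1) = (\sin x)^{1-g}(\cos x)^{1-h} = \tilde{\phi}^{\aitr}_0(x;g,h)$ returns the claim in the same form as the first case. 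The point requiring the most care, and the step I would verify in detail, is that the algebraic index shifts $-(\aitr, t_j)$ and $-(\nul, t_j)$ coming from Proposition \ref{prop:ai0M} agree on the nose with the combinatorial relabelling of the Maya diagram, and that the two prefactors $\tilde{\phi}^{\aitr}_0(x;g,h)$ and $1/\phi_0(x;g-1,h-1)$ indeed coincide.
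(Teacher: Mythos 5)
Your proposal is correct and is essentially the paper's own (implicit) argument: the paper states this proposition without a separate proof, deriving it---just like the preceding proposition on the second diagram---from Proposition \ref{prop:ai0M}, and your two cases reproduce that derivation exactly, namely the third relation applied directly at $(g,h)$ when $d^{\aitr}_1=0$, and the fourth relation applied to the moved Wronskian at $(g-1,h-1)$ and then inverted via $1/\phi_0(x;g-1,h-1)=(\sin x)^{1-g}(\cos x)^{1-h}=\tilde{\phi}^{\aitr}_0(x;g,h)$ when $d^{\aitr}_1\neq 0$. One remark: your bookkeeping correctly leaves all type $\ait$ indices unchanged in the $d^{\aitr}_1=0$ case, so the index $d^{\ait}_2$ in the paper's displayed formula for that case should be read as $d^{\ait}_1$ (a typo carried over from the second-diagram proposition, where the first type $\ait$ state genuinely disappears).
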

For a given tuple of states, we associate a pair of Maya diagrams with a division and we move the divisions to the left.
By applying the propositions repeatedly, we find that the Wronskian of a given tuple of states is equal to some Wronskian which constitutes the type $\ai $ states and the square-integrable states up to a scalar multiplication.
Namely we obtain 
\begin{thm} \label{thm:red}
Let $\bar{\mathcal D}_{\text J}= \{ 0,1,2, \dots ,d^{\text J}_{M_{\text J }} \} \setminus \{ d^{\text J}_{M_{\text J }} -d^{\text J}_1  , d^{\text J}_{M_{\text J }} - d^{\text J}_2 , \dots , d^{\text J}_{M_{\text J }} - d^{\text J}_{M_{\text J }} (=0)  \}$ and write $ \bar{\mathcal D}_{\text J}= \{ e^{\text J}_1 ,  e^{\text J}_2 , \dots ,  e^{\text J}_{\bar{M}_{\text J }} \}$ ($e^{\text J}_1 <  e^{\text J}_2 < \dots < e^{\text J}_{\bar{M}_{\text J }}, \bar{M}_{\text J } = d^{\text J}_{M_{\text J }} +1- M_{\text J }$) for ${\text J } \in \{ \ai ,\ait , \aitr , \nul \}$.
We have
\begin{align}
&  \text{W}[\tilde{\phi } ^{\ai }_{d^{\ai }_1},\ldots,\tilde{\phi } ^{\ai }_{d^{\ai }_{M_{\ai }}}, \tilde{\phi } ^{\ait }_{d^{\ait }_1},\ldots,\tilde{\phi } ^{\ait }_{d^{\ait }_{M_{\ait }}},   \tilde{\phi } ^{\aitr }_{d^{\aitr }_1},\ldots,\tilde{\phi } ^{\aitr }_{d^{\aitr }_{M_{\aitr }}}, \phi _{d^{\nul }_1},\ldots, \phi _{d^{\nul }_{M_{\nul }}}] (x;g,h) \label{eq:thmainul} \\
& \propto\text{W}[\tilde{\phi } ^{\ai }_{e^{\ait }_1 },\ldots,\tilde{\phi } ^{\ai }_{e^{\ait }_{\bar{M}_{\ait }}}, \tilde{\phi } ^{\ai }_{d^{\ai }_1 +d^{\ait }_{M_{\ait }} +1},\ldots,\tilde{\phi } ^{\ai }_{d^{\ai }_{M_{\ai }}+d^{\ait }_{M_{\ait }} +1}, \tilde{\phi } ^{\nul }_{e^{\aitr }_1 },\ldots,\tilde{\phi } ^{\nul }_{e^{\aitr }_{\bar{M}_{\aitr }}},  \tilde{\phi } ^{\nul }_{d^{\nul }_1 +d^{\aitr }_{M_{\aitr }} +1},\ldots \nonumber \\
& \qquad \qquad,\tilde{\phi } ^{\nul }_{d^{\nul }_{M_{\nul }}+d^{\aitr }_{M_{\aitr }} +1} ] (x,g-d^{\ait }_{M_{\ait }}  -d^{\aitr }_{M_{\aitr }} -2 ,h+d^{\ait }_{M_{\ait }}   -d^{\aitr }_{M_{\aitr }}  ) (\sin x)^{g_{\ai, \nul}} (\cos x)^{h_{\ai, \nul}} , \nonumber
\end{align}
where $g_{\ai, \nul}= ( d^{\ait }_{M_{\ait }}  +d^{\aitr }_{M_{\aitr }} +2 ) \{ -g+(d^{\ait }_{M_{\ait }}  +d^{\aitr }_{M_{\aitr }} +3) /2 \}$ and $h_{\ai, \nul} = ( d^{\ait }_{M_{\ait }}  -d^{\aitr }_{M_{\aitr }} ) \{ h+(d^{\ait }_{M_{\ait }} - d^{\aitr }_{M_{\aitr }} -1) /2 \}$.  
\end{thm}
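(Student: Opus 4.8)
The plan is to prove the theorem by iterating the two preceding propositions on the movement of the divisions: sweep the division of the second Maya diagram entirely to the left of its black ($\ait$) beads, then do the same for the first Maya diagram and its white ($\aitr$) beads, and finally bookkeep the three kinds of data that accumulate, namely the tuple of states, the shift of the parameters $(g,h)$, and the scalar prefactor.

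First I would sweep the division of the second Maya diagram. The rightmost black bead corresponds to $\tilde{\phi}^{\ait}_{d^{\ait}_{M_{\ait}}}$, and each application of the second-diagram proposition lowers every surviving $\ait$-index by one, consuming it once it reaches $0$; hence exactly $d^{\ait}_{M_{\ait}}+1$ steps remove all type $\ait$ states. Along the way each step raises every type $\ai$ index by one, so the original indices satisfy $d^{\ai}_k \mapsto d^{\ai}_k + d^{\ait}_{M_{\ait}}+1$, while the swept $\ait$ beads are reborn as type $\ai$ beads. I would verify, by induction on $M_{\ait}$ (the base case $M_{\ait}=1$ is a direct check), that the index set of these reborn beads is precisely the complement within $\{0,1,\dots,d^{\ait}_{M_{\ait}}\}$ of the reflected set $\{d^{\ait}_{M_{\ait}}-d^{\ait}_k\}$, that is $\bar{\mathcal D}_{\ait}=\{e^{\ait}_1,\dots,e^{\ait}_{\bar{M}_{\ait}}\}$. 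The type $\aitr$ and $\nul$ states ride along as spectators. Applying the first-diagram proposition $d^{\aitr}_{M_{\aitr}}+1$ times by the identical argument converts the $\aitr$ beads into type $\nul$ beads at the positions $\bar{\mathcal D}_{\aitr}$ and raises $d^{\nul}_k \mapsto d^{\nul}_k + d^{\aitr}_{M_{\aitr}}+1$, which reproduces exactly the tuple on the right-hand side of \eqref{eq:thmainul}.

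Next I would accumulate the parameters and the prefactor. The second-diagram steps each send $(g,h)\mapsto(g-1,h+1)$ and the first-diagram steps each send $(g,h)\mapsto(g-1,h-1)$, so after $d^{\ait}_{M_{\ait}}+1$ and $d^{\aitr}_{M_{\aitr}}+1$ steps the net shift is $(g,h)\mapsto(g-d^{\ait}_{M_{\ait}}-d^{\aitr}_{M_{\aitr}}-2,\,h+d^{\ait}_{M_{\ait}}-d^{\aitr}_{M_{\aitr}})$, as stated. Performing the second-diagram sweep first, its $j$-th step ($j=0,\dots,d^{\ait}_{M_{\ait}}$) acts at parameters $(g-j,h+j)$ and contributes the factor $(\sin x)^{1-(g-j)}(\cos x)^{h+j}$, while the $i$-th step of the first-diagram sweep ($i=0,\dots,d^{\aitr}_{M_{\aitr}}$) acts at $(g-d^{\ait}_{M_{\ait}}-1-i,\,h+d^{\ait}_{M_{\ait}}+1-i)$ and contributes $(\sin x)^{1-g'}(\cos x)^{1-h'}$ evaluated at those current parameters $(g',h')$. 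Summing these two arithmetic progressions in the exponents and simplifying identifies the total with $(\sin x)^{g_{\ai,\nul}}(\cos x)^{h_{\ai,\nul}}$.

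I expect the main obstacle to be the combinatorial step of the second paragraph: confirming that sweeping a division past a block of same-type beads yields exactly the reflected-complement set $\bar{\mathcal D}_{J}$, while simultaneously shifting the spectator indices by the correct amount and clearing the block in precisely $d^{J}_{M_{J}}+1$ steps. Keeping track of which bead is consumed versus merely shifted at each step is where the care lies; once the count of steps and the index dictionary are pinned down, the parameter and prefactor accumulations reduce to routine arithmetic-progression sums.
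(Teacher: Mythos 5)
Your proposal is correct and takes essentially the same route as the paper: the paper's proof also sweeps the division of the second Maya diagram $d^{\ait }_{M_{\ait }}+1$ times to the left (turning the type $\ait $ states into type $\ai $ states indexed by the reflected complement $\bar{\mathcal D}_{\ait }$, with parameters shifted to $(g-d^{\ait }_{M_{\ait }}-1,\,h+d^{\ait }_{M_{\ait }}+1)$ and prefactor exponents given by exactly the arithmetic-progression sums you describe), and then sweeps the division of the first Maya diagram $d^{\aitr }_{M_{\aitr }}+1$ times to reach the stated right-hand side. The only difference is one of presentation: the paper leaves the combinatorial index bookkeeping and the exponent sums implicit, while you spell them out.
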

\begin{proof}
We move the division of the second Maya diagram of the given states $d^{\ait }_{M_{\ait }} +1  $ times to the left.
By noticing the placement of the black beads on the right of the moved division, we find that the  left-hand side of Eq.(\ref{eq:thmainul}) is proportional to
\begin{align}
& \text{W}[\tilde{\phi } ^{\ai }_{e^{\ait }_1 },\ldots,\tilde{\phi } ^{\ai }_{e^{\ait }_{\bar{M}_{\ait }}}, \tilde{\phi } ^{\ai }_{d^{\ai }_1 +d^{\ait }_{M_{\ait }} +1},\ldots,\tilde{\phi } ^{\ai }_{d^{\ai }_{M_{\ai }}+d^{\ait }_{M_{\ait }} +1},  \tilde{\phi } ^{\aitr }_{d^{\aitr }_1},\tilde{\phi } ^{\aitr }_{d^{\aitr }_2},\ldots,\tilde{\phi } ^{\aitr }_{d^{\aitr }_{M_{\aitr }}}, \\
& \qquad \qquad \phi _{d^{\nul }_1}, \phi _{d^{\nul }_2},\ldots, \phi _{d^{\nul }_{M_{\nul }}}] ] (x,g-d^{\ait }_{M_{\ait }}  -1 ,h+d^{\ait }_{M_{\ait }} +1  ) (\sin x)^{g'} (\cos x)^{h'} , \nonumber  
\end{align}
where $g' = ( d^{\ait }_{M_{\ait }}  +	1 ) \{ -g+(d^{\ait }_{M_{\ait }}  +2) /2 \}$ and $h' 
 = ( d^{\ait }_{M_{\ait }}  +1) ( h+d^{\ait }_{M_{\ait }}  /2 )$.  
We move further the division of the first Maya diagram $d^{\aitr }_{M_{\aitr }} +1  $ times to the left.
Then we obtain the theorem.
\end{proof} 
In the example of $\tilde{\phi } ^{\ai }_{2} \tilde{\phi } ^{\ai }_{3} \tilde{\phi } ^{\ait }_{0} \tilde{\phi } ^{\ait }_{2} \tilde{\phi } ^{\aitr }_{3} \phi _{0} \phi _{1} $, we have
$\bar{\mathcal D}_{\ait }= \{0,1,2 \} \setminus \{ 2,0 \} = \{ 1 \}$, $\bar{\mathcal D}_{\aitr }= \{1,2,3 \}$ and
\begin{align}
&  \text{W}[\tilde{\phi } ^{\ai }_{2} \tilde{\phi } ^{\ai }_{3} \tilde{\phi } ^{\ait }_{0} \tilde{\phi } ^{\ait }_{2} \tilde{\phi } ^{\aitr }_{3} \phi _{0} \phi _{1} ] (x;g,h) \label{eq:thmexa} \\
& \propto\text{W}[\tilde{\phi } ^{\ai }_{1} \tilde{\phi } ^{\ai }_{5} \tilde{\phi } ^{\ai }_{6} \phi _{1} \phi _{2} \phi _{3} \phi _{4} \phi _{5}] (x,g-7 ,h-1) (\sin x)^{28-7g} (\cos x)^{1-h} , \nonumber
\end{align}
and the corresponding Maya diagrams are 
$$ \dots \bullet  \bullet \bullet | \circ \bullet \bullet \bullet \bullet \bullet \circ \circ \circ \dots , \quad \dots \bullet \bullet \bullet | \circ \bullet \circ \circ \circ \bullet \bullet \circ \circ \circ \dots .$$

Note that the left-hand side of Eq.(\ref{eq:thmainul}) is also proportional to each Wronskian of the followings;
\begin{align*}
& \text{W}[\tilde{\phi } ^{\ai }_{e^{\ait }_1 },\ldots,\tilde{\phi } ^{\ai }_{e^{\ait }_{\bar{M}_{\ait }}}, \tilde{\phi } ^{\ai }_{d^{\ai }_1 +d^{\ait }_{M_{\ait }} +1},\ldots,\tilde{\phi } ^{\ai }_{d^{\ai }_{M_{\ai }}+d^{\ait }_{M_{\ait }} +1}, \tilde{\phi } ^{\aitr }_{e^{\nul }_1 },\ldots,\tilde{\phi } ^{\aitr }_{e^{\aitr }_{\bar{M}_{\nul }}}, \tilde{\phi } ^{\aitr }_{d^{\aitr }_1 +d^{\nul }_{M_{\nul }} +1},\ldots \\
& \qquad \ldots , \tilde{\phi } ^{\aitr }_{d^{\aitr }_{M_{\aitr }}+d^{\nul }_{M_{\nul }} +1} ] (x,g-d^{\ait }_{M_{\ait }}  +d^{\nul }_{M_{\nul }}  ,h+d^{\ait }_{M_{\ait }}  +d^{\nul }_{M_{\nul }} +2  ) (\sin x)^{g_{\ai, \aitr}} (\cos x)^{h_{\ai, \aitr} } , \\
& \text{W}[\tilde{\phi } ^{\ait }_{e^{\ai }_1 },\ldots,\tilde{\phi } ^{\ait }_{e^{\ai }_{\bar{M}_{\ai }}}, \tilde{\phi } ^{\ait }_{d^{\ait }_1 +d^{\ai }_{M_{\ai }} +1},\ldots,\tilde{\phi } ^{\ait }_{d^{\ait }_{M_{\ait }}+d^{\ai }_{M_{\ai }} +1},   \tilde{\phi } ^{\nul }_{e^{\aitr }_1 },\ldots,\tilde{\phi } ^{\nul }_{e^{\aitr }_{\bar{M}_{\aitr }}},  \tilde{\phi } ^{\nul }_{d^{\nul }_1 +d^{\aitr }_{M_{\aitr }} +1},\ldots \\
& \qquad \ldots ,\tilde{\phi } ^{\nul }_{d^{\nul }_{M_{\nul }}+d^{\aitr }_{M_{\aitr }} +1} ] (x,g+d^{\ai }_{M_{\ai }}  -d^{\aitr }_{M_{\aitr }} ,h-d^{\ai }_{M_{\ai }}   -d^{\aitr }_{M_{\aitr }}-2  ) (\sin x)^{g_{\ait, \nul}} (\cos x)^{h_{\ait, \nul} } ,\\
& \text{W}[\tilde{\phi } ^{\ait }_{e^{\ai }_1 },\ldots,\tilde{\phi } ^{\ait }_{e^{\ai }_{\bar{M}_{\ai }}}, \tilde{\phi } ^{\ait }_{d^{\ait }_1 +d^{\ai }_{M_{\ai }} +1},\ldots,\tilde{\phi } ^{\ait }_{d^{\ait }_{M_{\ait }}+d^{\ai }_{M_{\ai }} +1},  \tilde{\phi } ^{\aitr }_{e^{\nul }_1 },\ldots,\tilde{\phi } ^{\aitr }_{e^{\aitr }_{\bar{M}_{\nul }}},\tilde{\phi } ^{\aitr }_{d^{\aitr }_1 +d^{\nul }_{M_{\nul }} +1},\ldots \\
& \qquad \ldots ,\tilde{\phi } ^{\aitr }_{d^{\aitr }_{M_{\aitr }}+d^{\nul }_{M_{\nul }} +1} ] (x,g+d^{\ai }_{M_{\ai }}  +d^{\nul }_{M_{\nul }} +2  ,h-d^{\ai }_{M_{\ai }}  +d^{\nul }_{M_{\nul }}   ) (\sin x)^{g_{\ait, \aitr}} (\cos x)^{h_{\ait, \aitr} } ,
\end{align*}
where $g_{\ai, \aitr}= ( d^{\ait }_{M_{\ait }}  -d^{\nul }_{M_{\nul }} ) \{ g+(d^{\ait }_{M_{\ait }}  -d^{\nul }_{M_{\nul }} -1) /2 \}$, $h_{\ai, \aitr} = ( d^{\ait }_{M_{\ait }}  +d^{\nul }_{M_{\nul }} +2 ) \{ h+(d^{\ait }_{M_{\ait }} + d^{\nul }_{M_{\nul }} +1) /2 \}$, $g_{\ait, \nul}= ( d^{\ai }_{M_{\aitr }}  -d^{\aitr }_{M_{\aitr }}  ) \{ g+(d^{\ai }_{M_{\ai }}  -d^{\aitr }_{M_{\aitr }} -1 ) /2 \}$, $h_{\ait, \nul} = ( d^{\ai }_{M_{\ai }}  + d^{\aitr }_{M_{\aitr }} +2) \{ -h+(d^{\ai }_{M_{\ai }} + d^{\aitr }_{M_{\aitr }} +3) /2 \}$, $g_{\ait, \aitr}= ( d^{\nul }_{M_{\nul }}  +d^{\ai }_{M_{\ai }} +2 ) \{ g+(d^{\nul}_{M_{\nul }}  +d^{\ai }_{M_{\ai }} +1) /2 \}$ and $h_{\ait, \aitr} = ( d^{\nul }_{M_{\nul }}  -d^{\ai }_{M_{\ai }} ) \{ h+(d^{\nul }_{M_{\nul }} - d^{\ai }_{M_{\ai }} -1) /2 \}$. 

\section{Application to extra eigenstates} \label{sec:Apee}
We give an applications of the relations of the Wronskians in section \ref{sec:MW}  to the extra eigenstates of the deformed PT system.
\begin{prop} \label{prop:delstate}
Let $\varphi _j $ be a seed solution or an eigenstate for $j=1,\dots ,{\mathcal N}$ and assume that $\varphi _1, \dots , \varphi _{{\mathcal N}}$ are distinct and $\varphi _{\ell}  = \tilde{\phi } ^{\aitr }_m$.
Then 
\begin{equation}
\phi^{(\mathcal{N})}_n(x)=
\frac{\text{W}[\varphi _1, \dots , \varphi _{\ell -1}, \varphi _{\ell +1} , \dots \varphi _{{\mathcal N}}](x;g,h)}{\text{W}[\varphi _1, \dots , \varphi _{{\mathcal N}}](x;g,h)}
\label{eq:phiMndel}
\end{equation}
 is an eigenfunction of the deformed PT Hamiltonian 
\begin{align}
&{\mathcal H}^{(\mathcal{N})}=-\frac{d^2}{dx^2}+ U(x;g,h)-2\frac{d^2\log\text{W}[\varphi _1, \dots , \varphi _{{\mathcal N}}](x;g,h)}{dx^2}
\label{eq:dPTHamil}
\end{align}
with the eigenvalue $\mathcal{E}_{-m-1}=-4(m+1 ) (g+h-m-1)$, provided that the deformed potential is non-singular on the open interval $(0,\pi /2)$ and $g$, $h$ are enough large.
\end{prop}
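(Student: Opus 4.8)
The plan is to recognize this as an instance of the classical Krein--Adler/Crum deletion principle and to reduce it to a general statement whose only input is that the removed state is an honest eigenfunction. The starting observation is that $\tilde{\phi}^{\aitr}_m$ is itself a genuine eigenfunction of $\mathcal{H}=\mathcal{H}(g,h)$, with eigenvalue $\tilde{\mathcal{E}}^{\aitr}_m(g,h)=\mathcal{E}_{-(m+1)}(g,h)=-4(m+1)(g+h-m-1)=\mathcal{E}_{-m-1}$. Thus the energy asserted in the proposition is exactly the energy of the seed being removed. This suggests the following general claim, of which the proposition is the special case: if $\varphi_1,\dots,\varphi_{\mathcal N}$ are pairwise distinct eigenfunctions of $\mathcal{H}$ with $\mathcal{H}\varphi_j=E_j\varphi_j$, then for each $\ell$ the ratio $W[\varphi_1,\dots,\widehat{\varphi_\ell},\dots,\varphi_{\mathcal N}]/W[\varphi_1,\dots,\varphi_{\mathcal N}]$ (the hat denoting omission) is an eigenfunction of $\mathcal{H}^{(\mathcal N)}$ with eigenvalue $E_\ell$. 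Applying this with $\varphi_\ell=\tilde{\phi}^{\aitr}_m$ and $E_\ell=\mathcal{E}_{-m-1}$ gives the proposition, since all four families $\ai,\ait,\aitr,\nul$ consist of eigenfunctions of $\mathcal{H}$.

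I would prove the general claim by induction on $\mathcal N$ via a single Darboux step. For the base case $\mathcal N=1$ the ratio is $1/\varphi_1$, and a direct computation using $\varphi_1''/\varphi_1=U-E_1$ shows $\bigl(-\tfrac{d^2}{dx^2}+U-2(\log\varphi_1)''\bigr)(1/\varphi_1)=E_1(1/\varphi_1)$, so $1/\varphi_1$ is an eigenfunction of $\mathcal{H}^{(1)}$ at energy $E_1$. For the inductive step I would use the Crum factorization identity (see \cite{crum})
\begin{equation*}
W[\varphi_1,\varphi_2,\dots,\varphi_{\mathcal N}](x)=\varphi_1(x)\,W[\tilde{\varphi}_2,\dots,\tilde{\varphi}_{\mathcal N}](x),\qquad \tilde{\varphi}_j:=\frac{W[\varphi_1,\varphi_j]}{\varphi_1},
\end{equation*}
where each $\tilde{\varphi}_j$ is an eigenfunction of $\mathcal{H}^{(1)}$ with the same eigenvalue $E_j$. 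Taking second logarithmic derivatives in this identity gives $\mathcal{H}^{(\mathcal N)}=(\mathcal{H}^{(1)})^{(\mathcal N-1)}$, and substituting it into numerator and denominator of the deletion ratio (for $\ell\ge 2$) makes the factor $\varphi_1$ cancel, leaving
\begin{equation*}
\frac{W[\varphi_1,\dots,\widehat{\varphi_\ell},\dots,\varphi_{\mathcal N}]}{W[\varphi_1,\dots,\varphi_{\mathcal N}]}=\frac{W[\tilde{\varphi}_2,\dots,\widehat{\tilde{\varphi}_\ell},\dots,\tilde{\varphi}_{\mathcal N}]}{W[\tilde{\varphi}_2,\dots,\tilde{\varphi}_{\mathcal N}]},
\end{equation*}
which is precisely the deletion ratio for the $\mathcal N-1$ seeds $\tilde{\varphi}_2,\dots,\tilde{\varphi}_{\mathcal N}$ in the system $\mathcal{H}^{(1)}$. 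The inductive hypothesis then identifies it as an eigenfunction of $\mathcal{H}^{(\mathcal N)}$ with eigenvalue $E_\ell$.

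The one case this does not immediately cover is $\ell=1$, where the deleted seed is the one used for the Darboux step. Here I would invoke the symmetry of the construction: permuting $\varphi_1,\dots,\varphi_{\mathcal N}$ only multiplies each Wronskian by a sign, so both $\mathcal{H}^{(\mathcal N)}$ and the deletion ratio depend only on the unordered set of seeds. When $\mathcal N\ge 2$ I would therefore perform the first Darboux step with some $\varphi_k$, $k\neq\ell$; then $\varphi_\ell$ lies among the remaining seeds and the previous argument applies verbatim, while for $\mathcal N=1$ the only possibility $\ell=1$ is the base case already treated.

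The main point requiring care is not the algebra but the analytic hypotheses: the ratios and the one-step transforms $\tilde{\varphi}_j$ must be well defined, i.e.\ every intermediate Wronskian must be zero-free on $(0,\pi/2)$. This is ensured exactly by the standing assumptions that the deformed potential is non-singular on $(0,\pi/2)$ and that $g,h$ are large, together with the linear independence of distinct states guaranteed by $g\pm h\notin\Z$ and $g,h\notin\Z+1/2$; these keep $W[\varphi_1,\dots,\varphi_{\mathcal N}]$ and all Wronskians appearing in the induction nonvanishing, so that each Darboux step is a genuine transformation and the resulting function is a bona fide eigenfunction rather than a merely formal one.
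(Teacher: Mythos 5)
Your argument is correct in substance, but it takes a genuinely different route from the paper's. You prove the general Krein--Adler deletion principle from scratch: for any family of (formal) eigenfunctions $\varphi_1,\dots,\varphi_{\mathcal N}$ of $\mathcal H$, deleting $\varphi_\ell$ from the Wronskian yields an eigenfunction of $\mathcal H^{(\mathcal N)}$ at the energy $E_\ell$ of the deleted state, by induction on $\mathcal N$ through the Crum factorization $W[\varphi_1,\dots,\varphi_{\mathcal N}]=\varphi_1\,W[\tilde{\varphi}_2,\dots,\tilde{\varphi}_{\mathcal N}]$; the proposition is then the special case $\varphi_\ell=\tilde{\phi}^{\aitr}_{m}$, $E_\ell=\tilde{\mathcal{E}}^{\aitr}_{m}=\mathcal{E}_{-m-1}$. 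The paper argues inside its Maya-diagram formalism instead: it moves the division of the first Maya diagram $m+1$ steps to the left (shifting the parameters to $(g-m-1,h-m-1)$), observes that under this shift the deletion of $\tilde{\phi}^{\aitr}_{m}$ becomes the addition of the ground state $\phi_0$, invokes the standard Crum statement Eq.(\ref{eq:phiMn}) at energy $\mathcal{E}_0=0$, and shifts back, the constant $4(m+1)(g+h-m-1)$ picked up by the potential producing exactly the eigenvalue $\mathcal{E}_{-m-1}$. Your route is more elementary and more general (it deletes a seed of any of the four types and uses nothing from Section \ref{sec:MW}); the paper's route keeps the result inside the bead-position/energy dictionary that is the theme of the paper, and it only ever requires non-vanishing of the original Wronskian and of its shifted counterpart (which is proportional to it times powers of $\sin x$ and $\cos x$), since the one-step seeds realized by division moves, $\tilde{\phi}^{J}_{0}$ and $\phi_0$, are nodeless.

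Two caveats on your write-up. First, your closing claim that the hypotheses keep every intermediate Wronskian zero-free is false: your induction may have to start its Darboux chain at a type $\aitr$ seed or at an excited eigenstate $\phi_n$, $n\geq 1$, both of which vanish inside $(0,\pi/2)$, so the intermediate potentials $\mathcal H^{(1)},\mathcal H^{(2)},\dots$ are in general singular there. This does not sink the proof: all identities in the induction are equalities of real-analytic functions away from a discrete set, and since $W[\varphi_1,\dots,\varphi_{\mathcal N}]$ is assumed zero-free on $(0,\pi/2)$, the final eigenvalue equation extends to the whole interval by continuity --- but the correct justification is analytic continuation, not non-vanishing of intermediate Wronskians. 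Second, your proof never uses the hypothesis that $g,h$ are large, because it establishes only the differential-equation statement; that hypothesis controls the boundary behaviour (square-integrability) which makes $\mathcal{E}_{-m-1}$ a genuine ``permitted'' eigenvalue as used in Section \ref{sec:Apee}. The paper inherits this analytic content from the cited standard result applied to $\phi_0$ at shifted parameters; in your approach it would need a separate check of the behaviour of the deletion ratio at $x=0$ and $x=\pi/2$.
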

\begin{proof}
We associate the tuple $\varphi _1, \dots , \varphi _{{\mathcal N}} $ to a pair of Maya diagrams with a division.
Then the $m +1 $-st bead to the left of the division of the first Maya diagram is white.
We move the division of the first Maya diagram $m+1$-times to the left, and denote the corresponding tuple by $\varphi '_1, \dots , \varphi '_{{\mathcal N}'}$.
Since the first bead to the right of the division is white,  no one of $\varphi '_1, \dots , \varphi '_{{\mathcal N}'} $ is proportional to $\phi _0$.
Hence the function 
\begin{equation}
\phi^{(\mathcal{N'})}_0(x)=
\frac{\text{W}[\varphi '_1, \dots \varphi '_{{\mathcal N}'},\phi _0](x;g-m-1,h-m-1)}{\text{W}[\varphi ' _1, \dots , \varphi  ' _{{\mathcal N}'}](x;g-m-1,h-m-1)}
\label{eq:phiMnp}
\end{equation}
 is an eigenfunction of the deformed Hamiltonian 
\begin{align}
&{\mathcal H}^{(\mathcal{N}')}=-\frac{d^2}{dx^2}+ V , \\
& V= U(x;g-m-1,h-m-1)-2\frac{d^2\log\text{W}[\varphi '_1, \dots , \varphi ' _{{\mathcal N}'}](x;g-m-1,h-m-1)}{dx^2} \nonumber
\end{align}
with the eigenvalue $\mathcal{E}_0=0$.
The Maya diagrams of the tuple $\varphi '_1, \dots \varphi '_{{\mathcal N}'},\phi _0 $ are obtained from the ones of the tuple $\varphi '_1, \dots , \varphi '_{{\mathcal N}'} $ by changing the first bead to the right of the division of the first Maya diagram to black.
We move the divisions of the first Maya diagrams of $\varphi '_1, \dots , \varphi '_{{\mathcal N}'} $ and that of $ \varphi '_1, \dots \varphi '_{{\mathcal N}'},\phi _0$ $m+1$-times to the right.
Then we recover $\varphi _1, \dots , \varphi _{{\mathcal N}}  $ from $\varphi '_1, \dots , \varphi '_{{\mathcal N}'} $ and we have $\varphi _1, \dots , \varphi _{\ell -1}, \varphi _{\ell +1} , \dots \varphi _{{\mathcal N}} $ from $\varphi '_1, \dots \varphi '_{{\mathcal N}'},\phi _0 $.
It follows from  $\text{W}[\varphi '_1, \dots , \varphi ' _{{\mathcal N}'}](x;g-m-1,h-m-1) \propto \text{W}[\varphi _1, \dots , \varphi _{{\mathcal N}}](x;g,h) (\sin x )^{(m+1) \{ g-(m+2)/2 \} } (\cos x)^{(m+1) \{ h-(m+2)/2 \} } $ that
\begin{align}
& V = U(x;g,h)-2\frac{d^2\log\text{W}[\varphi _1, \dots , \varphi  _{{\mathcal N}'}](x;g,h)}{dx^2} +4(m+1)(g+h-m-1) .
\end{align}
Combining with 
$\text{W}[\varphi '_1, \dots , \varphi ' _{{\mathcal N}'},\phi _0](x;g-m-1,h-m-1) \propto \\ \text{W}[\varphi _1, \dots , \varphi _{\ell -1}, \varphi _{\ell +1} , \dots \varphi _{{\mathcal N}} ](x;g,h)  (\sin x )^{(m+1) \{ g-(m+2)/2 \} } (\cos x)^{(m+1) \{ h-(m+2)/2 \} } $,\\ we have the proposition.
\end{proof}
We investigate the eigenvalues of the deformed PT Hamiltonian given by Eq.(\ref{eq:dPTHamil}) where the states $\varphi _1, \dots , \varphi _{{\mathcal N}}$ are described as Eq.(\ref{eq:tuple}), the deformed potential is non-singular on the open interval $(0,\pi /2)$ and $g$, $h$ are enough large.
It follows from Proposition \ref{prop:delstate} that the eigenvalue $ \mathcal{E}_{-m-1}=-4(m+1 ) (g+h-m-1)$ $(m \in \{ d^{\aitr }_{1}, \dots , d^{\aitr }_{M_{\aitr }} \}  )$ is permitted.
The eigenvalue $ \mathcal{E}_{n}=4n (g+h+n)$ $(n \in \Z _{\geq 0} \setminus \{ d^{\nul }_{1}, \dots , d^{\nul }_{M_{\nul }} \}  )$ is permitted and the eigenfunction is given by Eq.(\ref{eq:phiMn}).
The labeling of the eigenvalues corresponds to the position of the white beads of the first Maya diagram.

In the example of the deformed PT Hamiltonian given by the states $\tilde{\phi } ^{\ai }_{3} \tilde{\phi } ^{\ait }_{2} \tilde{\phi } ^{\aitr }_{1} \tilde{\phi } ^{\aitr }_{4}$ $ \tilde{\phi } ^{\aitr }_{5} \phi _{1} \phi _{3}$, the permitted  eigenvalues are $ \mathcal{E}_{-6} $, 
$ \mathcal{E}_{-5} $, $ \mathcal{E}_{-2} $, $ \mathcal{E}_{0} $, $ \mathcal{E}_{2} $, $ \mathcal{E}_{4} $, $ \mathcal{E}_{5} $, $ \mathcal{E}_{6} ,\dots $ and the first Maya diagram with division is 
$$
{{\dots \bullet \, \bullet \bullet \, \circ  \circ \, \bullet  \bullet \, \circ  \bullet | \circ   \bullet \circ  \bullet\circ \circ \circ \dots  }\atop{\dots \mbox{-}9 \mbox{-}8  \mbox{-}7 \mbox{-}6 \mbox{-}5 \mbox{-}4  \mbox{-}3 \mbox{-}2 \mbox{-}1\; \; 0\: 1\: 2\: 3 \: 4\: 5\: 6 \dots \; }}.
$$

 \section{Concluding remarks} \label{sec:rem}
In this article, we gave a correspondence between tuples of states and pairs of Maya diagrams with a division.
We have shown that a movement of the division corresponds to an identity of Wronskians of the states and that any Wronskian of four types of states is essentially equal  to a Wronskian of eigenstates and type $\ai $ seed solutions.
Here we propose a problem that the condition 
\begin{align}
 &\text{W}[\tilde{\phi } ^{\ai }_{d^{\ai }_1 } ,\ldots,\tilde{\phi } ^{\ai }_{d^{\ai }_{M_{\ai }}}, \tilde{\phi } ^{\nul }_{d^{\nul }_1},\ldots ,\tilde{\phi } ^{\nul }_{d^{\nul }_{M_{\nul }}} ] (x,g,h) \propto & \\
& \text{W}[\tilde{\phi } ^{\ai }_{\tilde{d}^{\ai }_1 } ,\ldots,\tilde{\phi } ^{\ai }_{\tilde{d}^{\ai }_{\tilde{M}_{\ai }}}, \tilde{\phi } ^{\nul }_{\tilde{d}^{\nul }_1},\ldots ,\tilde{\phi } ^{\nul }_{\tilde{d}^{\nul }_{\tilde{M}_{\nul }}} ] (x,g +m_1,h +m_2 ) (\sin x)^{m_3} (\cos x)^{m_4} , & \nonumber
\end{align}
$( 0< d^{\rm J }_1 < \dots < d^{\rm J }_{M_{\rm J }} ,  0< \tilde{d}^{\rm J }_1 < \dots <\tilde{d}^{\rm J }_{M_{\rm J }} , \: ({\rm J} =\ai , \nul ), \: m_1, m_2 \in \Z , \: m_3, m_4 \in \Rea )$ for any $g$ and $h$ leads to that the two tuples coincide and $m_1 =m_2=m_3=m_4=0$ or not?
If $g$ and $h$ are special, the above problem is negative because we have examples (e.g.  the systems equivalent to $\tilde{\phi } ^{\ai }_2 \tilde{\phi } ^{\aitr }_1 $ ($g=3(h-3)/(4h-9) $) and $\tilde{\phi } ^{\ai }_1 \tilde{\phi } ^{\aitr }_2 $ ($g=3h/(4h-9) $)) in \cite{HST}.

Our results in this article hold essentially true for the multi-indexed Laguerre polynomials.
We express the corresponding results for the multi-indexed Laguerre polynomials by using the notations in \cite{ST}.
We also associate tuples of eigenstates and three types of seed solutions Maya diagrams with couples of Maya diagrams with divisions.
Then a movement of the division corresponds to an identity of Wronskians of the states.
It is also shown that any Wronskian of four types of states is essentially equal to a Wronskian of eigenstates and type $\ai $ seed solutions. Namely we have
\begin{align}
&  \text{W}[\tilde{\phi } ^{\ai }_{d^{\ai }_1},\ldots,\tilde{\phi } ^{\ai }_{d^{\ai }_{M_{\ai }}}, \tilde{\phi } ^{\ait }_{d^{\ait }_1},\ldots,\tilde{\phi } ^{\ait }_{d^{\ait }_{M_{\ait }}},   \tilde{\phi } ^{\aitr }_{d^{\aitr }_1},\ldots,\tilde{\phi } ^{\aitr }_{d^{\aitr }_{M_{\aitr }}}, \phi _{d^{\nul }_1},\ldots, \phi _{d^{\nul }_{M_{\nul }}}] (x;g) \label{eq:thmainulL} \\
& \propto\text{W}[\tilde{\phi } ^{\ai }_{e^{\ait }_1 },\ldots,\tilde{\phi } ^{\ai }_{e^{\ait }_{\bar{M}_{\ait }}}, \tilde{\phi } ^{\ai }_{d^{\ai }_1 +d^{\ait }_{M_{\ait }} +1},\ldots,\tilde{\phi } ^{\ai }_{d^{\ai }_{M_{\ai }}+d^{\ait }_{M_{\ait }} +1}, \tilde{\phi } ^{\nul }_{e^{\aitr }_1 },\ldots,\tilde{\phi } ^{\nul }_{e^{\aitr }_{\bar{M}_{\aitr }}},  \tilde{\phi } ^{\nul }_{d^{\nul }_1 +d^{\aitr }_{M_{\aitr }} +1},\ldots \nonumber \\
& \qquad \qquad,\tilde{\phi } ^{\nul }_{d^{\nul }_{M_{\nul }}+d^{\aitr }_{M_{\aitr }} +1} ] (x,g-d^{\ait }_{M_{\ait }}  -d^{\aitr }_{M_{\aitr }} -2 ) x^{g_{\ai, \nul}} \exp ((d^{\aitr }_{M_{\aitr }}  -d^{\ait }_{M_{\ait }} )x^2/2 ), \nonumber
\end{align}
instead of Eq.(\ref{eq:thmainul}), where the notations in Theorem \ref{thm:red} are used.
Key relations are 
\begin{align}
& \text{W}[\tilde{\phi } ^{\ai }_{0}, \tilde{\phi } ^{J }_{n} ](x;g) \propto \tilde{\phi } ^{J }_{n -(\ai ,J) }  (x;g+1 ) \tilde{\phi } ^{\ai }_{0} (x;g) , \label{eq:W0JnL} \\
& \text{W}[\tilde{\phi } ^{\ait }_{0}, \tilde{\phi } ^{J }_{n} ](x;g) \propto \tilde{\phi } ^{J }_{n -(\ait ,J) }  (x;g-1) \tilde{\phi } ^{\ait }_{0} (x;g) , \nonumber \\
& \text{W}[\tilde{\phi } ^{\aitr }_{0}, \tilde{\phi } ^{J }_{n} ](x;g) \propto \tilde{\phi } ^{J }_{n -(\aitr ,J) }  (x;g-1 ) \tilde{\phi } ^{\aitr }_{0} (x;g) , \nonumber \\
& \text{W}[{\phi } _{0}, \tilde{\phi } ^{J }_{n} ](x;g) \propto \tilde{\phi } ^{J }_{n -(N ,J) }  (x;g+1) {\phi } _{0} (x;g) , \nonumber 
\end{align}
which corresponds to Eq.(\ref{eq:W0Jn}).

Odake and Sasaki \cite{osMW} extended the multi-indexed orthogonal polynomials to multi-indexed Wilson and Askey-Wilson polynomials, which appear in the discrete quantum mechanics, and Odake \cite{Odake} established that a multi-indexed Wilson (Askey-Wilson) polymonial which is expressed by type $\ai$ seed solutions and the type $\ait$ seed solutions is proportional to a multi-indexed Wilson (Askey-Wilson) polymonial which is expressed by only type $\ai$ seed solutions.
We believe that Odake's results are generalized and the results in this article are extended to the multi-indexed Wilson and Askey-Wilson polynomials.

\section*{Acknowledgements}
The author thanks Satoru Odake and Ryu Sasaki for valuable comments and fruitful discussions.
He also thanks C.-L. Ho for hospitality during his visit in Taipei in August 2013 where this work was started.
He is supported in part by the Grant-in-Aid for Young Scientists from the Japan Society for the Promotion of Science (JSPS), No.22740107.
\bigskip

\end{document}